\title{{The DoF of Two-way Butterfly Networks}}
\author{Mehdi Ashraphijuo, Vaneet Aggarwal, and Xiaodong Wang \thanks{M. Ashraphijuo and X. Wang are with the Electrical Engineering Department, Columbia University, New York, NY 10027 (e-mail: mehdi@ee.columbia.edu, wangx@ee.columbia.edu). V. Aggarwal is with the School of Industrial Engineering, Purdue University, West Lafayette, IN 47907 (e-mail: vaneet@purdue.edu).}}
\newtheorem{theorem}{Theorem}
\newtheorem{corollary}{Corollary}
\newtheorem{remark}{Remark}
\begin{document}
\maketitle

\begin{abstract}
This paper studies the two-way butterfly network, a class of two-way four-unicast networks. We first show that bidirectional links do not increase the degrees of freedom for this network thus giving the first example for networks, to the best of our knowledge, where bidirectional links do not increase the degrees of freedom. Further, we see that sufficient caching at the relays or increasing the number of antennas in the relays can double the two-way degrees of freedom for butterfly network.
\end{abstract}

{\bf Index terms:} Degrees of freedom, four-unicast channels, multiple-antenna channels, two-way network, butterfly network, caching.

\newpage

\section{Introduction}

While characterizing network capacity is in general unsolved, there has been considerable progress in two research fronts. The first one focuses on single-flow multi-hop networks, in which one source aims to send the same message to one or more destinations, using multiple relay nodes. Since in this scenario all destination nodes are interested in the same message, there is effectively only one information stream in the network. Starting from the max-flow-min-cut theorem of Ford-Fulkerson \cite{ford1956maximal}, there has been significant progress on this problem \cite{avestimehr2011wireless}. The second research direction focuses on multi-flow wireless networks with only one-hop between the sources and the destinations, i.e., the interference channel. While the capacity of the interference channel remains unknown, there has been a variety of approximate capacity results, such as constant-gap capacity approximations \cite{etkin2008gaussian,bresler2010approximate,bresler2008two} and degrees of freedom characterizations \cite{cadambe2008interference,etkin2006degrees,jafar2008degrees,motahari2014real,maddah2008communication,jafar2007degrees}.

The two-way communication between two users was first studied by Shannon \cite{Shannon}. Recently, there have been many attempts to demonstrate two-way communications experimentally \cite{Chen:1998aa,Khandani,Bliss:2007,Radunovic:2009aa,Aryafar,tech_report,Bharadia}. The two-way relay channel where two users communicate with each other in the presence of relays, has been widely studied \cite{b1,b2,i4,i5,nam2010capacity,i7,i8,Rankov,s2,s5,s8,s16,s20,s22,Avestim-t2,s4,s12,s13,JJ}. Two-unicast channels consist of two sources and two destinations communicating through a general network. Degrees of freedom for one-way $2\times2\times2$ fully-connected two-unicast channels has been studied in \cite{gou2012aligned}, and further extended with interfering relays in  \cite{gou2011aligned}. These results were further generalized to one-way $2\times2\times2$ non-layered topology in \cite{gou2014toward,gou2011degrees2}. General one-way two-unicast channel has been considered in \cite{wang2011multiple,shomorony2013two} and it was shown in \cite{shomorony2013two} that the DoF for any topology takes one of the values in $\{1,\frac{3}{2},2\}$, depending on the topology. Two-way two-unicast  channels have been studied for a single relay in \cite{wang2013degrees,wang2014beyond,xin2011coordinated}. In \cite{lee2013achievable}, three different achievability strategies for two-way MIMO $2\times2\times2$ fully-connected channel are proposed. A finite-field two-way two-unicast model is also studied in \cite{maier2013cyclic,hong2013two}.

In this paper, we first study the two-way butterfly network, a class of two-way four-unicast networks and find its degrees of freedom for the cases of no caching  at the relays. Butterfly network is motivated from the network coding example \cite{Yeung2010}. This network has one-way degrees of freedom of $2$ \cite{shomorony2013two}. In this paper, we show that the two-way degrees of freedom is also $2$. This is the first result to the best of our knowledge, where bidirectional links do not improve the degrees of freedom. In order to show this result, a genie-aided outer bound is derived. This result explains the challenge in considering general two-way networks since there are network configurations where the degree of freedom double \cite{JJ}, and there are configurations where there is no benefit of bidirectional links in terms of degrees of freedom.

We further consider the case where relays in the two-way butterfly network have access to a caching memory. Caching is a technique to reduce traffic load by exploiting the high degree of asynchronous content reuse and the fact that storage is cheap and ubiquitous in today's wireless devices \cite{2cc,3cc}. During off-peak periods when network resources are abundant, some content can be stored at the wireless edge (e.g., access points or end user devices), so that demands can be met with reduced access latencies and bandwidth requirements. The caching problem has a long history, dating back to the work by Belady in 1966 \cite{1ma}. There are various forms of caching, i.e., to store data at user ends, relays, etc. \cite{4cc}. However, using the uncoded data on devices can result in an inefficient use of the aggregate cache capacity \cite{5cc}. The caching problem consists of a placement phase which is performed offline and an online delivery phase. One important aspect of this problem is the design of the placement phase in order to facilitate the delivery phase. There are several recent works that consider communication scenarios where user nodes have pre-cached information from a fixed library of possible files during the offline phase, in order to minimize the transmission from source during the delivery phase \cite{maddah2014fundamental,ji2015throughput}. There are only a limited number of works on the degrees of freedom with caching. In particular, \cite{han2015degrees,han2015improving} study the degrees of freedom for the relay and interference channels with caching, respectively, under some assumptions and provide asymptotic results on the degrees of freedom as a function of the output of some optimization problems. In this paper we show that caching increases the degrees of freedom of the butterfly network to $4$. This is the first example, to the best of our knowledge, where caching at relays increases the degrees of freedom of the two-way unicast networks with relays. It demonstrates that caching at relays doubles the degrees of freedom of two-way butterfly network. 

The remainder of this paper is as following. In Section \ref{sec2}, we present the channel model, and in Section \ref{thm_2_LB_Secq} and Section \ref{thm_2_LB_Secq2w}, we provide the main results on the DoF of proposed model without and with caching, respectively. In Section \ref{kjhsfd}, a butterfly network with a multiple-antenna relay is studied. Finally, Section \ref{sec5} concludes this paper.

\section{Channel Model}\label{sec2}

Fig. \ref{fig:3.1q} and Fig. \ref{fig:3.2q} represent the one-way and two-way butterfly networks, respectively.  As shown in Fig. \ref{fig:3.2q}, the two-way butterfly network consists of four transmitters $S_1,\dots,S_4$, three relays $R_1,\dots,R_3$, and four receivers $D_1,\dots,D_4$. Each transmitter $S_i$ has one message that is intended for its respective receiver $D_i$. Fig. \ref{fig:2q} shows the two hops of this system separately. In the first hop (Fig. \ref{fig:2.1q}), the signals received at relays in time slot $m$ are
\begin{eqnarray}
Y_{R_1}[m]&=&H_{1,R_1}X_1[m]+H_{4,R_1}X_4[m]+Z_{R_1}[m],\\
Y_{R_2}[m]&=&\sum_{i=1}^{4}H_{i,R_2}X_i[m]+Z_{R_2}[m],\\
Y_{R_3}[m]&=&H_{2,R_3}X_2[m]+H_{3,R_3}X_3[m]+Z_{R_3}[m],
\end{eqnarray}
where $H_{i,R_k}$ is the channel coefficient from transmitter $S_i$ to relay $R_k$,  $X_i[m]$ is the signal transmitted from $S_i$, $Y_{R_k}[m]$ is the signal received at relay $R_k$ and $Z_{R_k}[m]$ is the i.i.d. circularly symmetric complex Gaussian noise with zero mean and unit variance, $i\in\{1,2,3,4\}$, $k\in\{1,2,3\}$. In the second hop (Fig. \ref{fig:2.2q}), the signals received at receivers in time slot $m$ are given by
\begin{eqnarray}
Y_i[m]&=&\sum_{k=1}^{2}H_{R_k,i}X_{R_k}[m]+Z_i[m],\ \ \ \ \ \ \ {\text{for}} \ i\in \{2,3\},\\
Y_i[m]&=&\sum_{k=2}^{3}H_{R_k,i}X_{R_k}[m]+Z_i[m],\ \ \ \ \ \ \ {\text{for}} \ i\in \{1,4\},
\end{eqnarray}
where $H_{R_k,i}$ is the channel coefficient from relay $R_k$ to receiver $D_i$, $X_{R_k}[m]$ is the signal transmitted from $R_k$, $Y_i[m]$ is the signal received at receiver $D_i$ and $Z_i[m]$ is the i.i.d. circularly symmetric complex Gaussian noise with zero mean and unit variance, $i\in\{1,2,3,4\}$, $k\in\{1,2,3\}$. We assume that the channel coefficient values are drawn i.i.d. from a continuous distribution and they are bounded from above and below, i.e., $H_{\min} < | H_{i,R_k}[m] | < H_{\max}$ and $H_{\min} < | H_{R_k,i}[m] | < H_{\max}$ as in \cite{cadambe2008interference}. The relays are assumed to be full-duplex and equipped with caches. Furthermore, the relays are assumed to be causal, which means that the signals transmitted from the relays depend only on the signals received in the past and not on the current received signals and can be described as
\begin{equation}
X_{R_k}[m] = f(Y_{R_k}^{m-1},X_{R_k}^{m-1},C_{R_k}),
\end{equation}
where $X_{R_k}^{m-1}\triangleq(X_{R_k}[1],\dots,X_{R_k}[m-1])$, $Y_{R_k}^{m-1}\triangleq(Y_{R_k}[1],\dots,Y_{R_k}[m-1])$, and $C_{R_k}$ is the cached information in relay ${R_k}$. We assume that source $S_i$, $i\in\{1,2,3,4\}$ only knows channels $H_{i,R_k}$, $k\in\{1,2,3\}$; relay $R_k$, $k\in\{1,2,3\}$ only knows channels $H_{i,R_k}$ and $H_{R_k,i}$, $i\in\{1,2,3,4\}$; and destination $D_i$, $i\in\{1,2,3,4\}$ only knows channels $H_{R_k,i}$, $k\in\{1,2,3\}$.

The source $S_i$, $i\in\{1,2,3,4\}$ has a message $W_i$ that is intended for destination $D_i$. $|W_i|$ denotes the size of the message $W_i$. The rates ${\mathcal R}_i=\frac{\log |W_i|}{n}$, $i\in\{1,2,3,4\}$ are achievable during $n$ channel uses by choosing $n$ large enough, if  the probability of error can be arbitrarily small for all four messages simultaneously. The capacity region ${\mathcal C}=\{({\mathcal R}_1,{\mathcal R}_2,{\mathcal R}_3,{\mathcal R}_4)\}$ represents the set of all achievable quadruples. The sum-capacity is the maximum sum-rate that is achievable, i.e., ${\mathcal C}_{\Sigma}(P)=\sum_{i=1}^{4}{\mathcal R}^c_i$ where $({\mathcal R}_1^c,\dots,{\mathcal R}_4^c) = \arg\max _{({\mathcal R}_1,\dots,{\mathcal R}_4)\in{\mathcal C}}\sum_{i=1}^{4} {\mathcal R}_i $ and $P$ is the transmit power at each node (both source nodes and relay nodes). The degrees of freedom is defined as
\begin{equation}
DoF \triangleq \lim_{P\rightarrow\infty}\frac{{\mathcal C}_{\Sigma}(P)}{\log P}=
\sum_{i=1}^{4}\lim_{P\rightarrow\infty}\frac{{\mathcal R}^c_i}{\log P}=\sum_{i=1}^{4}d_i,
\end{equation}
where $d_i \triangleq \lim_{P\rightarrow\infty}\frac{{\mathcal R}^c_i}{\log P}$ is defined as the ${\text{DoF}}$ of source $S_i$, for $i\in\{1,2,3,4\}$. We note that ${\text{DoF}}$ is the degrees of freedom for almost every channel realization (in other words, with probability 1 over the channel realizations). We denote ${\text{DoF}}_C$ as the degrees of freedom for the case of with relay caching, and ${\text{DoF}}_{NC}$ as the degrees of freedom for the case of no relay caching.

\begin{figure}[htbp]
\centering
\subfigure[One-way butterfly network.]{
	\includegraphics[width=9cm]{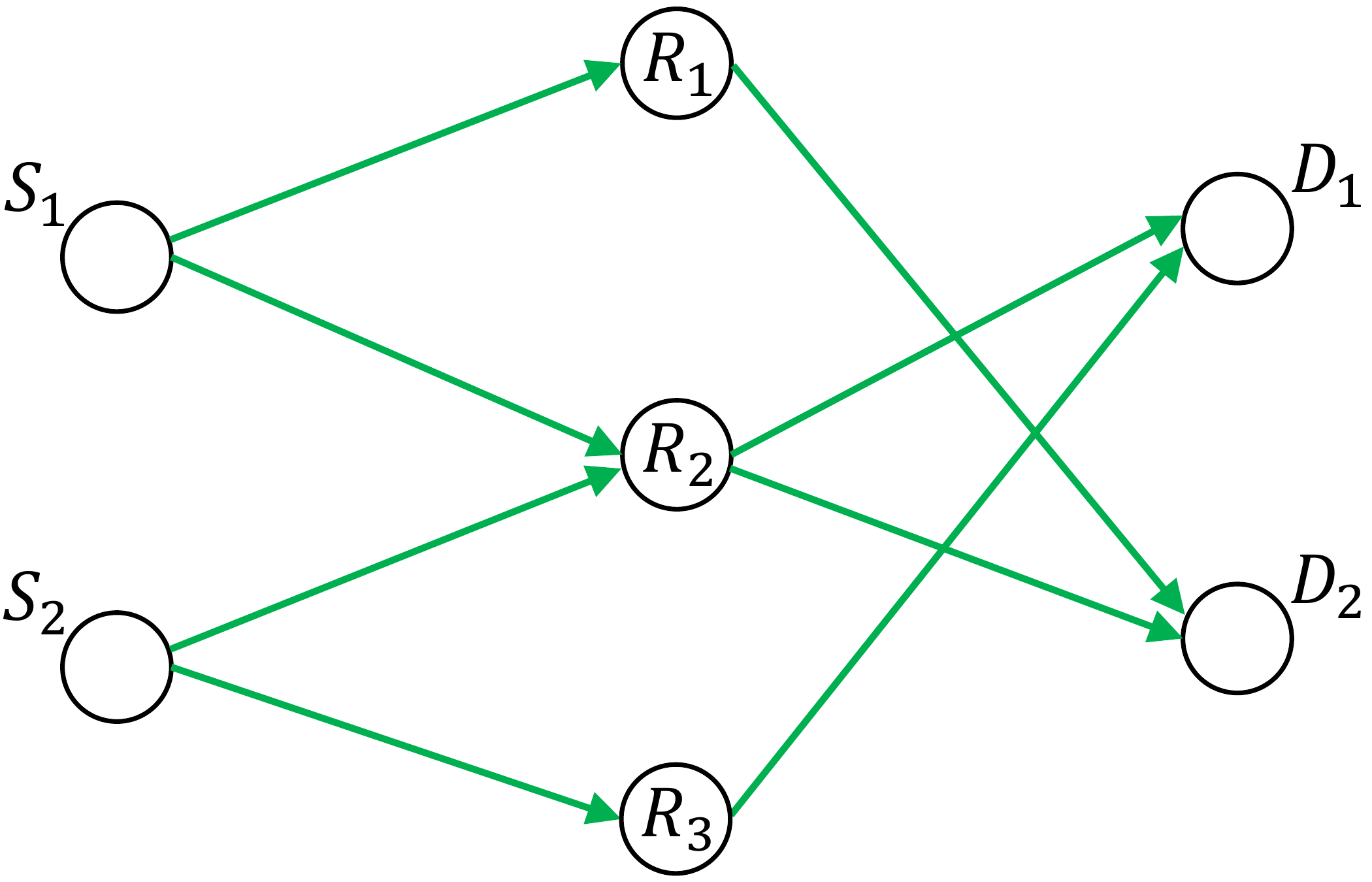}
    \label{fig:3.1q}
}
\subfigure[Two-way butterfly network.]{
	\includegraphics[width=9cm]{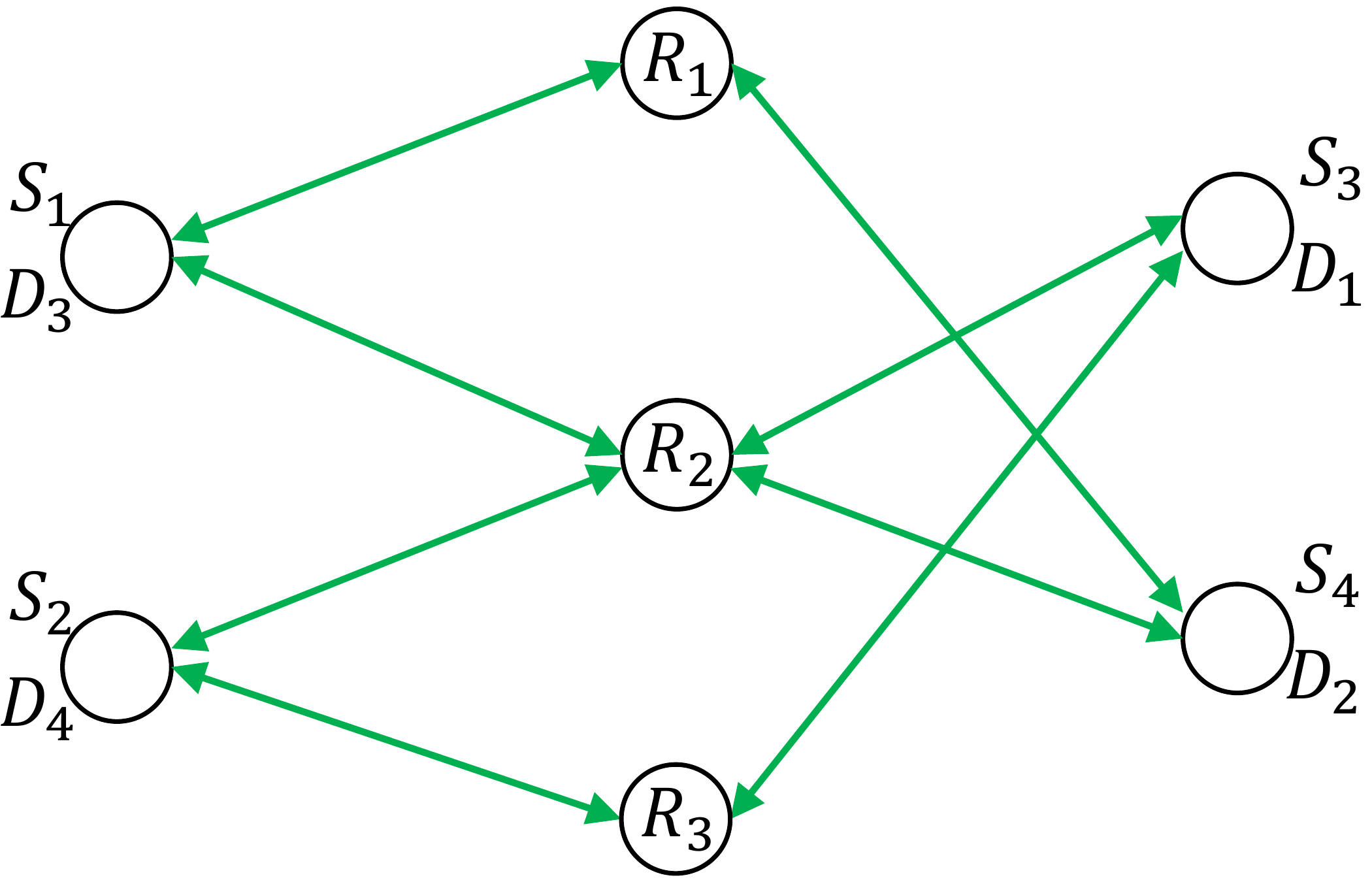}
    \label{fig:3.2q}
}
\caption[Optional caption for list of figures]{Butterfly network.}
\label{fig:3}
\end{figure}

\begin{figure}[htbp]
\centering
\subfigure[The channels from transmitters to the relays.]{
	\includegraphics[width=10cm]{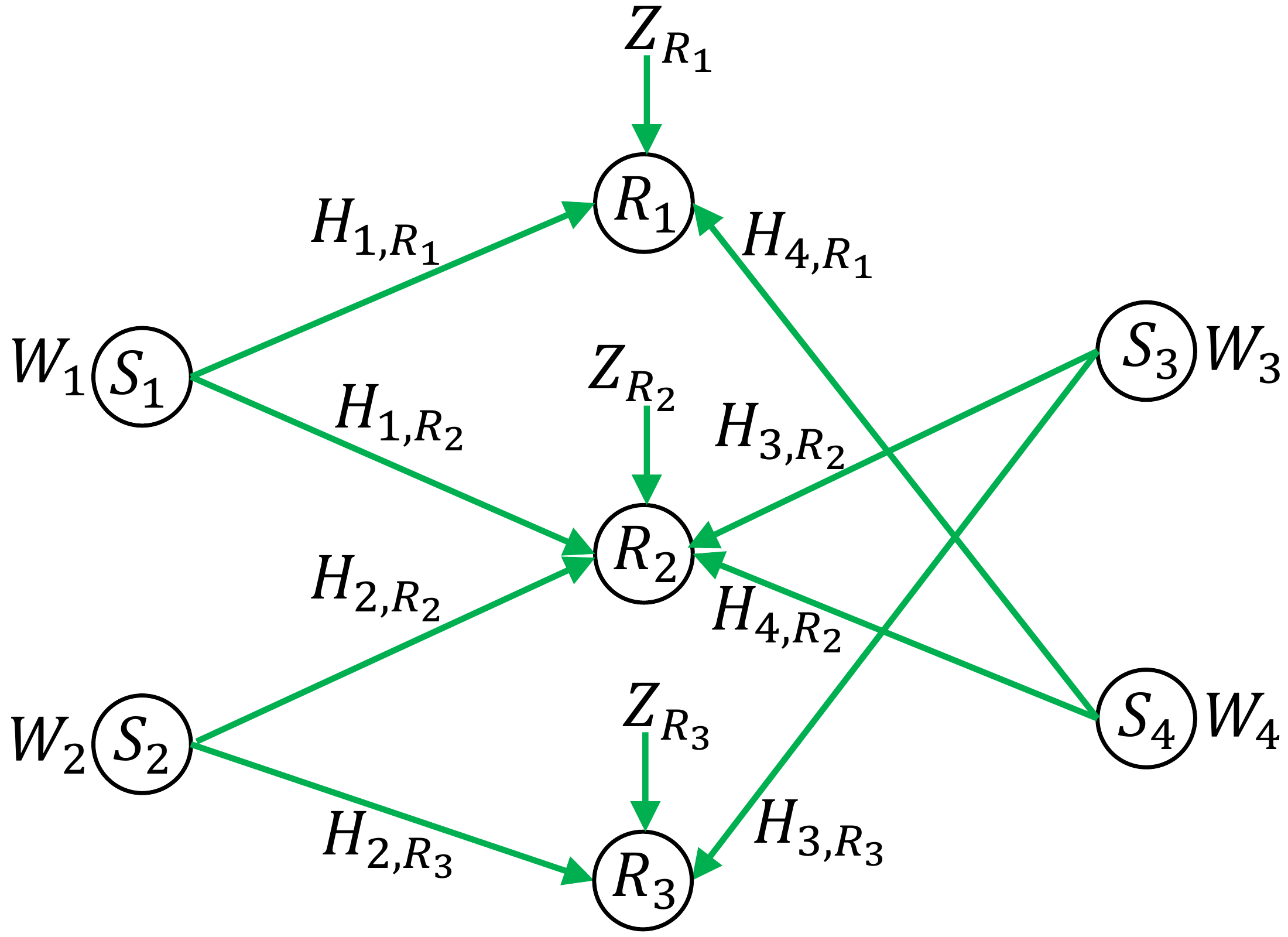}
    \label{fig:2.1q}
}
\subfigure[The channels from relays to the receivers.]{
	\includegraphics[width=10cm]{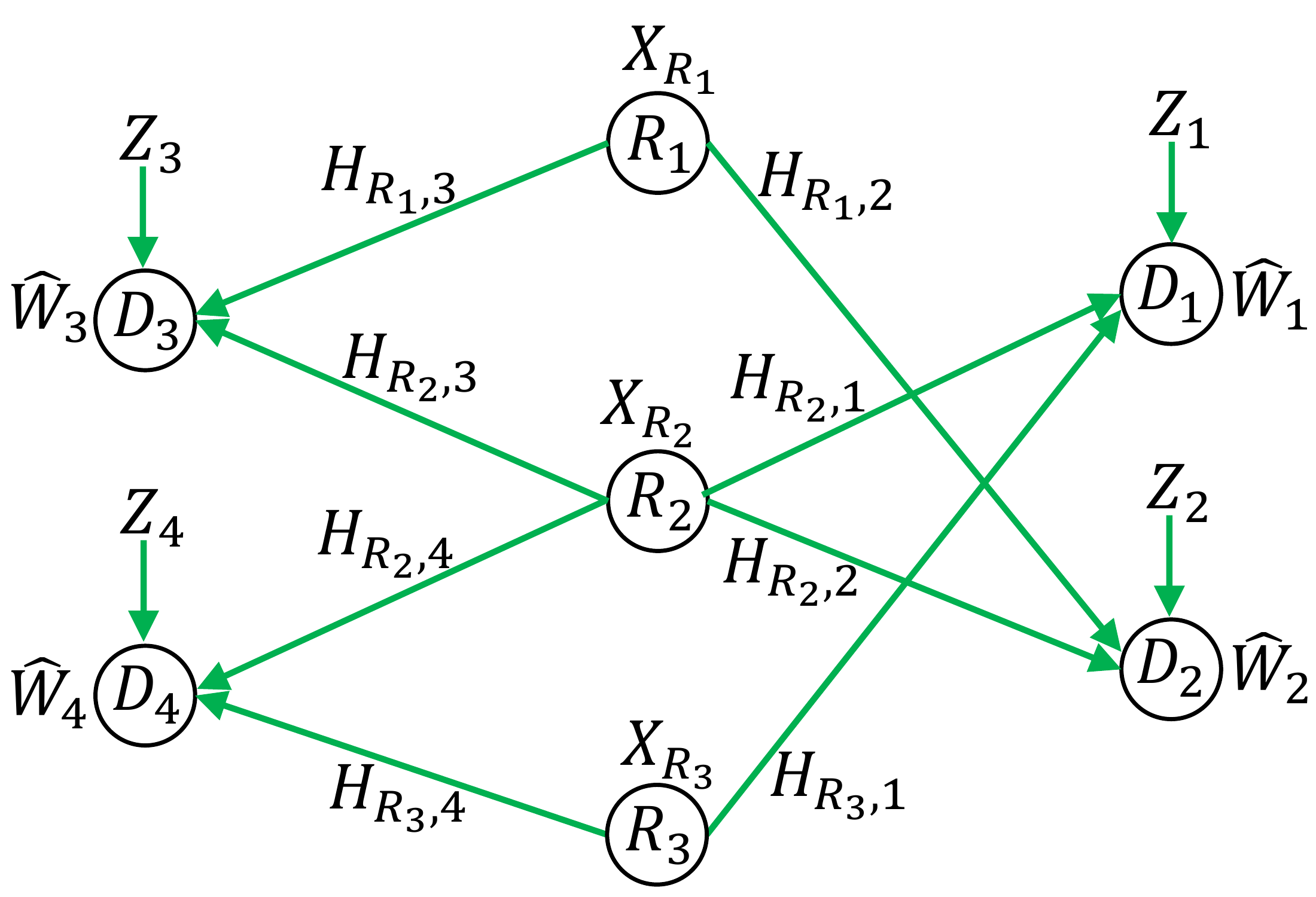}
    \label{fig:2.2q}
}
\caption[Optional caption for list of figures]{The channels from and to relays in a two-way butterfly network.}
\label{fig:2q}
\end{figure}

\section{Two-way Butterfly Network without Relay Caching}\label{thm_2_LB_Secq}

The result in this section depicts the impact of bidirectional links for the case of no relay caching. We find that the degrees of freedom for the two-way butterfly network is $2$. The result is surprising since the degrees of freedom for the one-way butterfly network given in Fig. \ref{fig:3.1q} is proven to be $2$ \cite[Theorem~1, Part B$'$]{shomorony2013two}, thus showing no improvement in the sum DoF by using two-way capabilities. To the best of our knowledge, this is the first network where two-way network achieves the same total DoF as the similar one-way network.

\begin{theorem}\label{thm_1_LBddd}
For the two-way butterfly network,  ${\text{DoF}}_{NC} = 2$.
\end{theorem}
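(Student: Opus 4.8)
The plan is to establish the two inequalities ${\text{DoF}}_{NC}\ge 2$ and ${\text{DoF}}_{NC}\le 2$ separately. For the lower bound there is essentially nothing to prove: the one-way butterfly network of Fig.~\ref{fig:3.1q} is a restriction of the two-way network (deactivate the backward messages and use only the forward links), so every rate point achievable there is achievable here; since the one-way butterfly has sum-DoF $2$ by \cite[Theorem~1, Part~B$'$]{shomorony2013two}, we get ${\text{DoF}}_{NC}\ge 2$. The entire content of the theorem is therefore the converse ${\text{DoF}}_{NC}\le 2$.

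For the converse I would build a genie-aided cut-set bound that exploits the asymmetry of the relays' observations in the causal, no-cache model of Section~\ref{sec2}: $R_1$ overhears only $\{S_1,S_4\}$, $R_3$ overhears only $\{S_2,S_3\}$, and $R_2$ alone is connected to every source, while on the output side $\{D_1,D_4\}$ hear $\{R_2,R_3\}$ and $\{D_2,D_3\}$ hear $\{R_1,R_2\}$. Starting from Fano's inequality, $\sum_i\mathcal R_i \le \tfrac1n\sum_i I(W_i;Y_i^n)+\epsilon_n$ (and the analogous sum of forward and backward terms in the two-way case), the steps would be: (i) note that the ``companion'' relay heard by $D_i$ besides $R_2$ --- namely $R_1$ for $i\in\{2,3\}$ and $R_3$ for $i\in\{1,4\}$ --- carries, in the one-way instance, no information about $W_i$, so that $W_i$ must be conveyed to $D_i$ through $R_2$; (ii) supply to each $D_i$ a carefully chosen genie (a subset of the remaining messages, or noiseless copies of certain relay signals) so that, after subtracting the now-known terms, its effective channel collapses to a point-to-point or low-rank MIMO link; and (iii) recombine the single-letter bounds, using the mutual independence of the messages, into $\sum_i\mathcal R_i\le 2\log P + o(\log P)$. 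A natural way to organize step (iii) is to merge the receivers into a small number of super-receivers along the pairings $\{D_1,D_4\}$ and $\{D_2,D_3\}$ and to argue that the post-genie observation of each super-receiver spans only two high-SNR dimensions, so that the two pairs together contribute $2$ rather than the $3$ that a naive cut-set count (three single-antenna relays) would give.

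The step I expect to be the real obstacle is making this count tight, i.e. ruling out a DoF strictly between $2$ and $3$. One must show that the single scalar $X_{R_2}$ through which all the messages are funnelled cannot be ``widened'' by the side paths through $R_1$ and $R_3$ beyond two total dimensions, and this has to hold for \emph{arbitrary} causal relay mappings --- possibly nonlinear and depending on the relays' own received noise --- and, crucially in the two-way setting, in the presence of the feedback and cooperation enabled by the bidirectional links: each $S_i$ may now adapt $X_i$ to signals it has received, so the relay inputs are no longer clean functions of the messages alone, and the ``companion relay carries no information about $W_i$'' property used in step (i) must be \emph{re-established} via the genie. The genie therefore has to be generous enough to neutralize this cooperation (and the backward traffic) yet frugal enough to leak at most $2\log P$; pinning down a genie that threads this needle --- and thereby certifying that bidirectional links, and the extra backward unicasts, buy nothing in DoF --- is exactly what the ``genie-aided outer bound'' announced in the introduction is meant to accomplish.
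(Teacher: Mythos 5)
Your lower bound is correct, though heavier than necessary: rather than importing the (nontrivial) one-way achievability of \cite[Theorem~1, Part~B$'$]{shomorony2013two}, the paper simply silences everything except $S_1$, $R_2$, $S_3$ and runs a two-way $1\times1\times1$ relay scheme ($R_2$ forwards the sum of the two received signals), which already gives sum DoF $2$. Either route works, so the achievability direction is fine.

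The converse is where your proposal has a genuine gap, and you essentially concede it yourself: you describe a program (Fano, per-receiver genies, receiver pairings $\{D_1,D_4\}$ and $\{D_2,D_3\}$, a claim that each super-receiver spans only two dimensions) but you never exhibit the genie, and you explicitly flag that ruling out a DoF strictly between $2$ and $3$ --- in the presence of arbitrary causal relay maps and source adaptation --- is the unresolved step. The paper closes this with a much simpler and more generous genie that makes all of your worries moot: give the group $\{S_1,R_1,S_4\}$ (together with the co-located receivers $D_2,D_3$) both messages $W_1,W_4$, and give the group $\{S_2,R_3,S_3\}$ (with $D_1,D_4$) both messages $W_2,W_3$. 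This hands no destination its desired message (each $D_i$ wants a message originating on the \emph{other} side), but it allows full cooperation within each side, collapsing the network into two three-antenna super-nodes $A_1$ and $A_2$ whose only connection is the single-antenna relay $R_2$. A cut-set bound on this two-way relay channel then gives at most $1$ DoF in each direction --- each $A_j$ receives across the cut only through $R_2$'s one antenna --- hence sum DoF $\le 2$, for arbitrary (nonlinear, adaptive, feedback-exploiting) strategies, since cooperation and genie side information can only enlarge the capacity region. The key structural choice you missed is to absorb the side relays $R_1$ and $R_3$ \emph{into the transmitter groups} (rather than pairing receivers), which is exactly what isolates $R_2$ as the sole inter-group bottleneck and makes the ``widening by side paths'' issue disappear.
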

\begin{proof}
We first show the upper bound. Consider $S_1$, $R_1$, and $S_4$ as one group of nodes and $S_2$, $R_3$, and $S_3$ as another group. Using genie-aided side information, assume that the nodes in each group have access to all of the messages in the same group. Note that the first group has $W_1$ and $W_4$ needed by the second group and the second group has $W_2$ and $W_3$ needed by the first group. The genie-aided side information does not give the needed message to any destination, and the two groups can only communicate through $R_2$. The described channel can be seen in Fig. \ref{fig:2qq}. Nodes $A_1$ and $A_2$ in the figure each have three antennas. Thus, cutset bound gives that sum DoF $\le 2$. The reason is that $R_2$ is a single antenna node and each of $A_1$ and $A_2$ can only decode one DoF of information from it.

\begin{figure}[htbp]
\centering
\subfigure[The channels from transmitters to the relays.]{
	\includegraphics[width=10cm]{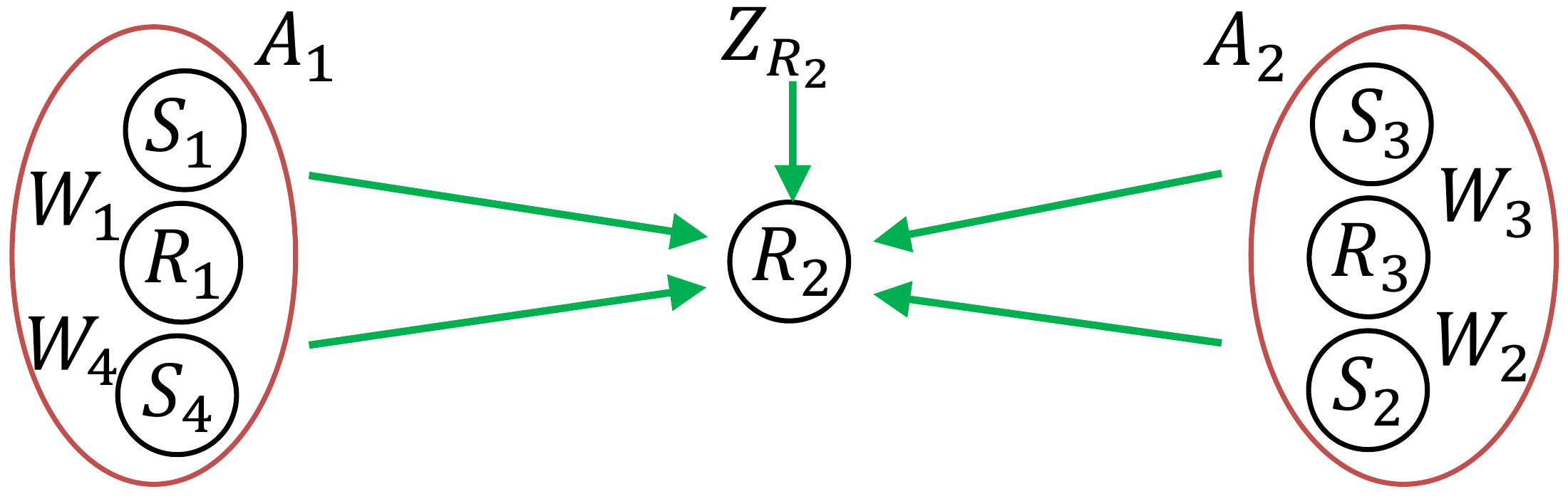}
    \label{fig:2.1qx}
}
\subfigure[The channels from relays to the receivers.]{
	\includegraphics[width=10cm]{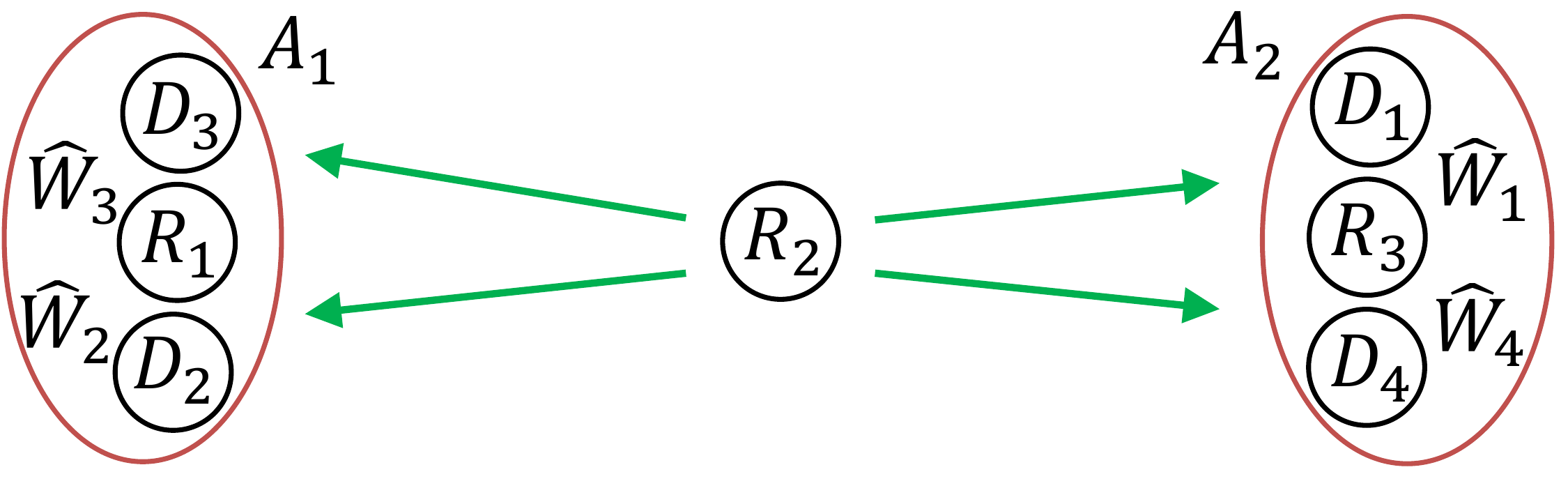}
    \label{fig:2.2qx}
}
\caption[Optional caption for list of figures]{The genie-aided butterfly network.}
\label{fig:2qq}
\end{figure}

The proof for achievability is straightforward. If all the nodes except for $S_1$, $R_2$, and $S_3$ in Fig. \ref{fig:3} are silent, then the channel can be seen as a two-way $1\times1\times1$ relay network formed by $S_1$, $R_2$, and $S_3$. This channel can achieve two degrees of freedom by simply forwarding the sum of the received signals at relay $R_2$, which is the sum of the two messages from $S_1$ and $S_3$.
\end{proof}

\begin{remark}
For the one-way butterfly network given in Fig. \ref{fig:3.1q}, the lower bound of $2$ on the total degrees of freedom is tricky  \cite[Theorem~1, Part B$'$]{shomorony2013two}, but the outer bound of $2$ is a simple cutset bound. In contrast, for the two-way butterfly network given in Fig. \ref{fig:3.2q}, as we saw in the proof of Theorem \ref{thm_1_LBddd}, the outer bound of $2$ on the total degrees of freedom is not easy to see, but the lower bound of $2$ is straightforward.
\end{remark}

\section{Two-way Butterfly Network with Relay Caching}\label{thm_2_LB_Secq2w}

\subsection{Transmission and Caching Strategy}

We now assume that each relay is equipped with a cache that can store the data from the sources. Our goal is to design strategies for caching and transmission so that the sum rate of all four source-destination pairs is maximized. Our strategy comprises two parts. The first phase is the transmission from sources to the relays, as shown in Fig. \ref{fig:2.1q}, which is performed offline and is known as the placement phase. The second phase is the transmission from relays to the destinations, as shown in Fig. \ref{fig:2.2q}, which is performed online and is known as the delivery phase. We assume that the relays decode ${W}_i$, $i=1,\dots,4$ in the offline phase and save $W'_1\triangleq {W}_1\oplus {W}_3$, $W'_2\triangleq {W}_2\oplus {W}_4$ in their caches. The transmitted signals from the relays intend to make $W'_1$ decodable at $D_1$ and $D_3$, and $W'_2$ decodable at $D_2$ and $D_4$ in Fig. \ref{fig:2.2q}. 

\subsection{Main Results on the Two-way Butterfly Network with Relay Caching}

In this section, we consider the case where the relays have a caching memory as discussed above. In this case, we show that the degrees of freedom is 4, thus depicting that the two-way degrees of freedom is twice that of the one-way degrees of freedom.

\begin{theorem}\label{thm_1_LBdc}
For the two-way butterfly network with relay caching as descried above, ${\text{DoF}}_{C} = 4$.
\end{theorem}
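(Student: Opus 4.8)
The outer bound is immediate and I would dispose of it first: every receiver $D_i$ is a single-antenna node, so the cut separating $D_i$ from the rest of the network gives $d_i\le 1$, whence ${\text{DoF}}_C=\sum_{i=1}^{4}d_i\le 4$. The substance is the matching lower bound, and the plan is to implement the caching/delivery strategy of Section~\ref{thm_2_LB_Secq2w}. In the placement phase (whose cost is not charged against the degrees of freedom) the relays obtain $W_1,\dots,W_4$ and cache $W'_1=W_1\oplus W_3$ and $W'_2=W_2\oplus W_4$; exploiting the bidirectional, full-duplex links over a long offline interval, this can be arranged even for $R_1$, which does not directly hear $S_2,S_3$, and $R_3$, which does not directly hear $S_1,S_4$ (route $W_2,W_3$ to $R_1$ through a user node that hears $R_2$ and reaches $R_1$, and symmetrically for $R_3$). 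The key structural fact I would use is that matching the first- and second-hop connectivity forces the two-way butterfly to co-locate $S_1$ with $D_3$, $S_3$ with $D_1$, $S_2$ with $D_4$, and $S_4$ with $D_2$; hence each receiver already knows the partner message of the XOR it must decode ($D_1$ knows $W_3$, $D_3$ knows $W_1$, $D_2$ knows $W_4$, $D_4$ knows $W_2$). It is therefore enough to deliver $W'_1$ to $\{D_1,D_3\}$ and $W'_2$ to $\{D_2,D_4\}$, each at one degree of freedom, after which each receiver XORs out its known message; this is already the point of the argument, since it explains how two ``symbol streams'' yield four message-degrees-of-freedom.

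For the delivery phase (Fig.~\ref{fig:2.2q}) the plan is one-shot linear beamforming at the relays. Let $s_1,s_2$ be coded symbols carrying $W'_1,W'_2$, and let $R_k$ transmit $X_{R_k}=a_ks_1+b_ks_2$. Since $D_2,D_3$ hear $\{R_1,R_2\}$ while $D_1,D_4$ hear $\{R_2,R_3\}$, I would choose the six coefficients so that the unwanted stream is zero-forced at every receiver: $s_2$ nulled at $D_3$ and at $D_1$, i.e.\ $H_{R_1,3}b_1+H_{R_2,3}b_2=0$ and $H_{R_2,1}b_2+H_{R_3,1}b_3=0$; and $s_1$ nulled at $D_2$ and at $D_4$, i.e.\ $H_{R_1,2}a_1+H_{R_2,2}a_2=0$ and $H_{R_2,4}a_2+H_{R_3,4}a_3=0$. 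This is a homogeneous system of four equations in six unknowns, so it has a nonzero solution; fixing $a_2=b_2=1$ yields an explicit solution with all coefficients finite because the channel gains are bounded away from $0$. Under this choice $D_1$ and $D_3$ each see a scalar multiple of $s_1$ plus noise, and $D_2,D_4$ each see a scalar multiple of $s_2$ plus noise, so every receiver decodes its stream at one degree of freedom and then recovers its own message from its side information. Hence $d_1=d_2=d_3=d_4=1$ and ${\text{DoF}}_C=4$, meeting the outer bound.

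I expect the main obstacle to be the delivery step, on two counts. First, one must check that the four zero-forcing constraints are jointly satisfiable with only three relay antennas; the count above (four equations, six unknowns) works precisely because $R_2$ is shared between the two receiver pairs, so feasibility is not obvious a priori. Second, one must verify that the resulting beamformers do not accidentally annihilate a desired-signal coefficient at any of the four receivers; since each surviving coefficient is a nonzero polynomial in the channel gains, it is nonzero with probability one over the channel realizations, which is the standard genericity argument as in \cite{cadambe2008interference}. The placement phase, by contrast, is routine once its cost is excluded, and the outer bound is a one-line cut-set argument.
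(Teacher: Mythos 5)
Your proposal is correct and follows essentially the same route as the paper: the outer bound is the trivial single-antenna cut-set bound, and the delivery phase is exactly the paper's zero-forcing scheme — after normalizing $a_2=b_2=1$ your coefficients $a_1=-H_{R_2,2}/H_{R_1,2}$, $a_3=-H_{R_2,4}/H_{R_3,4}$, $b_1=-H_{R_2,3}/H_{R_1,3}$, $b_3=-H_{R_2,1}/H_{R_3,1}$ reproduce the paper's relay transmissions verbatim, with the same reliance on co-located source--destination pairs to strip the XOR. Your added remarks on the genericity of the surviving desired-signal coefficients and on how $R_1,R_3$ acquire both cached messages during placement are sensible elaborations the paper leaves implicit, but they do not change the argument.
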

\begin{proof}
The upper bound follows from the cutset bound. We now provide an achievability strategy. The relays know the new messages $W'_1$ and $W'_2$ and the encoded signals in all of the relays for messages $W'_1$ and $W'_2$ at time $m\in [1,n]$ are the same, i.e., $A[m] = f(W'_1)$ and $B[m] = f(W'_2)$, respectively. At time $m$, the relays transmit the following messages
\begin{eqnarray}
X_{R_1}[m] &=&  - \frac{H_{R_2,2}}{H_{R_1,2}} A[m] - \frac{H_{R_2,3}}{H_{R_1,3}} B[m],\nonumber\\
X_{R_2}[m] &=& A[m] + B[m],\nonumber\\
X_{R_3}[m] &=& - \frac{H_{R_2,4}}{H_{R_3,4}} A[m] - \frac{H_{R_2,1}}{H_{R_3,1}} B[m].\nonumber
\end{eqnarray}
Using this, we see that the received signals at the destinations are as follows
\begin{eqnarray}
Y_1[m] &=& H_{R_2,1}\left(A[m]+ B[m]\right)+
H_{R_3,1}\left(- \frac{H_{R_2,4}}{H_{R_3,4}} A[m] - \frac{H_{R_2,1}}{H_{R_3,1}} B[m]\right) + Z_1[m]\nonumber\\
&=& \left(H_{R_2,1}- \frac{H_{R_3,1}H_{R_2,4}}{H_{R_3,4}}\right)A[m] + Z_1[m],\nonumber\\
Y_2[m] &=& H_{R_2,2}\left(A[m] + B[m]\right)+
H_{R_1,2}\left(- \frac{H_{R_2,2}}{H_{R_1,2}} A[m] - \frac{H_{R_2,3}}{H_{R_1,3}} B[m]\right) + Z_2[m]\nonumber\\
&=& \left(H_{R_2,2}- \frac{H_{R_1,2}H_{R_2,3}}{H_{R_1,3}}\right)B[m] + Z_2[m],\nonumber\\
Y_3[m] &=& H_{R_2,3}\left(A[m] + B[m]\right)+
H_{R_1,3}\left(- \frac{H_{R_2,2}}{H_{R_1,2}} A[m] - \frac{H_{R_2,3}}{H_{R_1,3}} B[m]\right) + Z_3[m]\nonumber\\
&=& \left(H_{R_2,3}- \frac{H_{R_1,3}H_{R_2,2}}{H_{R_1,2}}\right)A[m] + Z_3[m],\nonumber\\
Y_4[m] &=& H_{R_2,4}\left(A[m] + B[m]\right)+
H_{R_3,4}\left(- \frac{H_{R_2,4}}{H_{R_3,4}} A[m] - \frac{H_{R_2,1}}{H_{R_3,1}} B[m]\right) + Z_4[m]\nonumber\\
&=& \left(H_{R_2,4}- \frac{H_{R_3,4}H_{R_2,1}}{H_{R_3,1}}\right)B[m] + Z_4[m],\nonumber
\end{eqnarray}
Note that the first and the third receivers receive noisy versions of $A[m]$, from which they can decode $W'_1$ and subtract the contribution of their transmission to get the interference-free message and thus the desired signal can be decoded. The argument is similar for the second and the fourth receivers using $B[m]$ and $W'_2$ and thus showing that four degrees of freedom can be achieved.
\end{proof}

This theorem shows that for the model in Fig. \ref{fig:3}, although the bidirectional links does not increase the DoF, relay caching can achieve the maximum possible degrees of freedom, i.e., $4$.

In addition,  the following statement for the lower bound of the model with limited caching in the relays can be obtained.

{\begin{corollary}\label{thm_1_LBdcx}
For the two-way butterfly network given in Fig. \ref{fig:3} with caching $p$ portion of each of the messages $W'_1$ and $W'_2$ into each of the relays ($0\le p\le 1$), the total DoF of $2+2p$ is achievable.
\end{corollary}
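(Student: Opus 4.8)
The plan is to prove the corollary by time-sharing between the two achievability schemes already in hand: the zero-forcing cache scheme from the proof of Theorem~\ref{thm_1_LBdc}, which needs $W'_1$ and $W'_2$ fully cached at every relay and then yields $4$ total DoF, and the amplify-and-forward scheme from the proof of Theorem~\ref{thm_1_LBddd}, which drives only the two-way $1\times1\times1$ relay $S_1\!\leftrightarrow\!R_2\!\leftrightarrow\!S_3$, uses no cache at all, and yields $2$ DoF. Since $0\le p\le 1$, the target value $2+2p$ is exactly the convex combination $p\cdot 4+(1-p)\cdot 2$, so the natural idea is a resource split in the ratio $p:(1-p)$ between the two schemes.

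Concretely, I would first split the cached messages, writing $W'_1=(W'_{1,c},W'_{1,nc})$ and $W'_2=(W'_{2,c},W'_{2,nc})$, where $W'_{1,c}$ and $W'_{2,c}$ are the $p$-portions stored in every relay and carry a $p$-fraction of the DoF of $W'_1$ and $W'_2$. Partition the $n$ channel uses into a block $\mathcal{B}_1$ of length $pn$ and a block $\mathcal{B}_2$ of length $(1-p)n$. On $\mathcal{B}_1$ every relay forms $A[m]=f(W'_{1,c})$ and $B[m]=f(W'_{2,c})$ from its cache and transmits the signals $X_{R_1}[m]$, $X_{R_2}[m]$, $X_{R_3}[m]$ of Theorem~\ref{thm_1_LBdc}, so that $D_1,D_3$ see interference-free noisy copies of $A[m]$ and $D_2,D_4$ see interference-free noisy copies of $B[m]$; after cancelling its own transmitted contribution and XOR-ing with its locally known message, each destination recovers a $p$-fraction of its own message, so $\mathcal{B}_1$ contributes the DoF tuple $(p,p,p,p)$. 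On $\mathcal{B}_2$ I would silence every node except $S_1,R_2,S_3$ and run the amplify-and-forward scheme of Theorem~\ref{thm_1_LBddd}, delivering one interference-free stream each toward $D_1$ and $D_3$ --- namely the non-cached remainders of $W_1$ and $W_3$ --- which contributes $(1-p,0,1-p,0)$. Adding the two blocks gives the achievable tuple $(1,p,1,p)$ and hence total DoF $2+2p$; only achievability is claimed, so no converse is needed.

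The single point requiring care is the consistency of the split with the memory budget. On $\mathcal{B}_1$ the relays use nothing beyond $W'_{1,c}$ and $W'_{2,c}$, which is precisely the $p$-fraction of $(W'_1,W'_2)$ the corollary permits them to store, so the storage constraint is met; on $\mathcal{B}_2$ the relays forward only freshly received source signals and touch no cached content, so the two phases do not interfere. I would also check that the purely algebraic decoding step of Theorem~\ref{thm_1_LBdc} (cancel the self-interference, then cancel the locally known XOR summand) carries over unchanged when each message is replaced by its truncation, which is immediate because that step acts symbol by symbol. I do not anticipate any genuine obstacle beyond this bookkeeping, since the substance of the corollary is already contained in the two theorems it interpolates.
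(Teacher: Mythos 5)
Your proposal is correct and follows essentially the same route as the paper: time-sharing a $p$-fraction of the channel uses with the cache-based scheme of Theorem~\ref{thm_1_LBdc} (contributing $4p$) and the remaining $(1-p)$-fraction with the no-cache scheme of Theorem~\ref{thm_1_LBddd} (contributing $2(1-p)$), for a total of $2+2p$. The additional bookkeeping you supply on splitting $W'_1,W'_2$ into cached and non-cached parts and verifying the memory budget is consistent with, and slightly more explicit than, the paper's argument.
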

\begin{proof}
We apply time-sharing to obtain this result. In $(1-p)$ portion of the time, we do not use the caching of the relays and DoF of $2$ is achievable as in Theorem \ref{thm_1_LBddd}. In the other $p$ portion of the time, we assume that the relays have access to messages $W'_1$ and $W'_2$ available from the caching and apply the same transmission strategy as in Theorem \ref{thm_1_LBdc}. So, each user can achieve the DoF of $1$ and the total DoF $= 4$ is achievable in this part of the time-sharing. So, in total time, the average sum DoF of $4p + 2(1-p) = 2 + 2p$ is achievable.
\end{proof}}

\section{Two-way Butterfly Network with Multiple-Antenna Relay}\label{kjhsfd}

In the previous section, we showed that relay caching can increasing the degrees of freedom for butterfly network. Here we see that increasing the number of antennas at relay $R_2$ can also increase the degrees of freedom to $4$ for butterfly network. 

Fig. \ref{fig:10} represents the two-way butterfly network with 3 antennas at relay $R_2$. There are some differences in the model compared with the one in previous sections as described in the following. The channels ${\bf H}_{i,R_2}$, $\forall i \in \{1,\dots,4\}$ are $3\times 1$, and the channels ${\bf H}_{R_2,i}$, $\forall i \in \{1,\dots,4\}$ are $1\times 3$ vectors. The scalar beamformers $v_1$ and $v_3$ are for transmission from relays $R_1$ and $R_3$, respectively. Also, ${\bf v}_2$ is the $3\times 3$ beamforming matrix for transmission from relay $R_2$. In addition, ${\bf Y}_{R_2}[m]$ is the $3\times 1$ vector signal received at relay $R_2$ and ${\bf Z}_{R_2}[m]$ is the $3\times 1$ vector i.i.d. circularly symmetric complex Gaussian additive noise with zero mean and unit variance entries at relay $R_2$.

\begin{figure}[htbp]
\centering
	\includegraphics[width=8.5cm]{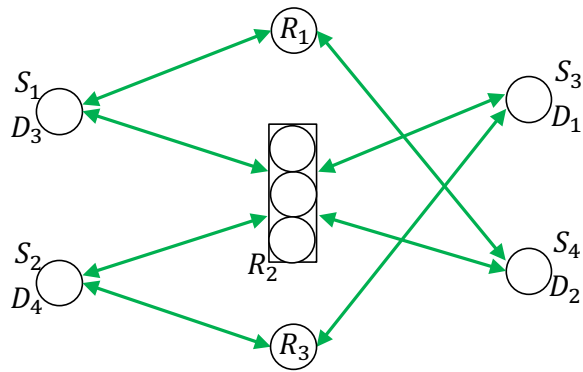}
\caption{Two-way butterfly network with $3$ antennas at $R_2$.}
\label{fig:10}
\end{figure}

The following theorem shows that increasing the number of antennas at relays can increase the degrees of freedom of channel. 

\begin{theorem}\label{thm_mul}
For the two-way butterfly network with $3$ antennas at relay $R_2$ as descried above, ${\text{DoF}} = 4$.
\end{theorem}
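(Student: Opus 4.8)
The plan is to prove the converse and the achievability separately; the converse is immediate and all the content sits in the achievability.

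\emph{Converse.} Each destination $D_i$ has a single antenna, so its received sequence is a scalar process, and since the channel gains are bounded above we have $\mathcal R_i\le\log(1+cP)=\log P+O(1)$ for a constant $c$ depending only on $H_{\max}$; hence $d_i\le 1$ and $\mathrm{DoF}\le 4$. (Equivalently one may cut around $\{D_1,\dots,D_4\}$: the relay-to-destination channel has only four receive antennas, so the number of messages that can cross the cut carries multiplexing gain at most $4$. The genie-aided tightening used for Theorem~\ref{thm_1_LBddd} is neither available nor needed once $R_2$ carries three antennas.)

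\emph{Achievability.} I would use a one-shot linear amplify-and-forward scheme. All four sources transmit independent Gaussian-coded symbols $X_i[m]$ at full power with target $d_i=1$; the relays forward $X_{R_1}[m]=v_1Y_{R_1}[m-1]$, $X_{R_3}[m]=v_3Y_{R_3}[m-1]$ and $\mathbf X_{R_2}[m]=\mathbf v_2\mathbf Y_{R_2}[m-1]$, which are admissible causal maps. Substituting the first-hop equations into the second-hop equations, each $D_i$ receives a scalar $\sum_{j=1}^{4}c_{ij}X_j[m-1]+n_i[m]$, where $n_i$ is a linear combination of the (bounded-variance) relay and receiver noises and each $c_{ij}$ is an explicit expression that, for fixed channel gains, is linear in $(v_1,v_3,\mathbf v_2)$. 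Reading the topology together with the two-way structure, at every $D_i$ exactly two of the four symbols are genuine interference (the two messages of the other two-way pair), one is the message transmitted by the source co-located with $D_i$ — which $D_i$ already knows, exactly as $D_1,D_3$ knew $W_3,W_1$ when undoing $W_1'=W_1\oplus W_3$ in the proof of Theorem~\ref{thm_1_LBdc} — and one is the desired $X_i$. I would therefore impose the eight equations $c_{ij}=0$, with $j$ ranging over the two interfering indices of each $D_i$. This is a homogeneous linear system in the $11$ complex unknowns $(v_1,v_3,\operatorname{vec}\mathbf v_2)$, so its solution space has dimension at least $3$; choosing any nonzero solution, normalizing it to meet the relay power constraints, and having $D_i$ cancel the contribution of its own co-located source (using the corresponding loop-back gain, as is standard in two-way relaying), each $D_i$ is left with $c_{ii}X_i$ plus noise of $P$-independent variance, i.e.\ an interference-free point-to-point link of SNR $\Theta(P)$ and one DoF. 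Summing over $i$ gives $\mathrm{DoF}\ge 4$, matching the converse.

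\emph{Main obstacle.} The delicate step is to guarantee that the chosen beamformers keep all four desired coefficients $c_{ii}$ nonzero: a priori some $c_{ii}$ could vanish identically on the three-dimensional solution subspace for a given channel realization. I would dispatch this by a genericity argument in the style of \cite{cadambe2008interference}: the solution subspace depends algebraically on the channel gains, each $c_{ii}$ restricts to a polynomial on the resulting incidence variety that is not identically zero (it can be made nonzero at one explicit realization), hence the bad channel realizations form a proper algebraic subvariety; the union of the four such measure-zero sets is still measure zero, so for almost every channel realization a generic solution of the linear system works, and the desired signals survive. Two side remarks would round off the proof: the forwarded noise contributes only bounded variance, since the channel gains lie in $(H_{\min},H_{\max})$ and the beamformers are normalized, so it does not affect the DoF; and the count $11\ge 8$ is exactly what the three antennas at $R_2$ buy — with a single-antenna $R_2$ one would face eight nulling constraints in only three unknowns, consistent with the DoF being $2$ in Theorem~\ref{thm_1_LBddd}, while it is the two-way co-location that already halves the per-destination interference from three messages to two.
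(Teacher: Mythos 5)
Your proof takes essentially the same route as the paper: an amplify-and-forward relay strategy in which the eight interference-nulling conditions at the four destinations form a homogeneous linear system with more unknowns than equations (the paper further sets $v_1=v_3=0$ and counts $9>8$ rather than your $11>8$), so a nontrivial beamformer exists, and each destination cancels the contribution of its co-located source. The one substantive addition is your genericity argument that the desired coefficients $c_{ii}$ do not vanish on the solution subspace for almost every channel realization --- a point the paper's proof silently omits --- but this is a refinement of the same scheme rather than a different approach.
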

\begin{proof}
The upper bound follows from the cutset bound. We now provide an achievability strategy. The received signals at relays are as following:
\begin{eqnarray}
Y_{R_1}[m] &=& H_{1,R_1}X_1[m] + H_{4,R_1}X_4[m] + Z_{R_1}[m],\label{1eq}\\
{\bf Y}_{R_2}[m] &=& \sum_{i=1}^{4} {\bf H}_{i,R_2}X_i[m] + {\bf Z}_{R_2}[m],\label{2eq}\\
Y_{R_3}[m] &=& H_{2,R_3}X_2[m] + H_{3,R_3}X_3[m] + Z_{R_3}[m].\label{3eq}
\end{eqnarray}
And given the received signals and the relay beamformers, we can write the received signals in the destinations as below:
\begin{eqnarray}
Y_1[m] &=& H_{R_1,1} v_1 Y_{R_1}[m]+ {\bf H}_{R_2,1} {\bf v_2} {\bf Y}_{R_2}[m] + Z_1[m],\label{1qq}\\
Y_2[m] &=& H_{R_3,2} v_3 Y_{R_3}[m]+ {\bf H}_{R_2,2} {\bf v_2} {\bf Y}_{R_2}[m] + Z_2[m],\label{2qq}\\
Y_3[m] &=& H_{R_3,3} v_3 Y_{R_3}[m]+ {\bf H}_{R_2,3} {\bf v_2} {\bf Y}_{R_2}[m] + Z_3[m],\label{3qq}\\
Y_4[m] &=& H_{R_1,4} v_1 Y_{R_1}[m]+ {\bf H}_{R_2,4} {\bf v_2} {\bf Y}_{R_2}[m] + Z_4[m].\label{4qq}
\end{eqnarray}

Applying \eqref{1eq}-\eqref{3eq} into \eqref{1qq}-\eqref{4qq} gives the following:
\begin{eqnarray}
Y_1[m] &=& H_{R_1,1} v_1 \left(H_{1,R_1}X_1[m] + H_{4,R_1}X_4[m] + Z_{R_1}[m]\right)+\nonumber\\
&& {\bf H}_{R_2,1} {\bf v_2} \left(\sum_{i=1}^{4} {\bf H}_{i,R_2}X_i[m] + {\bf Z}_{R_2}[m]\right) + Z_1[m],\label{1qqt}\\
Y_2[m] &=& H_{R_3,2} v_3 \left(H_{2,R_3}X_2[m] + H_{3,R_3}X_3[m] + Z_{R_3}[m]\right)+\nonumber\\
&& {\bf H}_{R_2,2} {\bf v_2} \left(\sum_{i=1}^{4} {\bf H}_{i,R_2}X_i[m] + {\bf Z}_{R_2}[m]\right) + Z_2[m],\label{2qqt}\\
Y_3[m] &=& H_{R_3,3} v_3 \left(H_{2,R_3}X_2[m] + H_{3,R_3}X_3[m] + Z_{R_3}[m]\right)+\nonumber\\
&& {\bf H}_{R_2,3} {\bf v_2} \left(\sum_{i=1}^{4} {\bf H}_{i,R_2}X_i[m] + {\bf Z}_{R_2}[m]\right) + Z_3[m],\label{3qqt}\\
Y_4[m] &=& H_{R_1,4} v_1 \left(H_{1,R_1}X_1[m] + H_{4,R_1}X_4[m] + Z_{R_1}[m]\right)+\nonumber\\
&& {\bf H}_{R_2,4} {\bf v_2} \left(\sum_{i=1}^{4} {\bf H}_{i,R_2}X_i[m] + {\bf Z}_{R_2}[m]\right) + Z_4[m].\label{4qqt}
\end{eqnarray}

If we find a set of beamformers in relays such that intended signals can be decoded in destinations, the proof of achievability is complete. For example, in destination $D_1$, $X_1[m]$ is the intended signal and the destination already has access to $X_3[m]$. Therefore, the multipliers of $X_2[m]$ and $X_4[m]$ should be zero in destination $D_1$. Therefore, we conclude the following two equations from \eqref{1qqt}:
\begin{eqnarray}
0 &=& {\bf H}_{R_2,1} {\bf v_2}  {\bf H}_{2,R_2},\label{1q}\\
0 &=& H_{R_1,1} v_1 H_{4,R_1} + {\bf H}_{R_2,1} {\bf v_2}  {\bf H}_{4,R_2}.\label{2q}
\end{eqnarray}
Similarly, the multipliers of $X_2[m]$ and $X_4[m]$ should be zero in destination $D_3$, and using \eqref{3qqt}, the following two equations are obtained:
\begin{eqnarray}
0 &=& {\bf H}_{R_2,3} {\bf v_2}  {\bf H}_{4,R_2},\label{3q}\\
0 &=& H_{R_3,3} v_3 H_{2,R_3} + {\bf H}_{R_2,3} {\bf v_2}  {\bf H}_{2,R_2}.\label{4q}
\end{eqnarray}
Similarly, the multipliers of $X_1[m]$ and $X_3[m]$ should be zero in destination $D_2$, and using \eqref{2qqt}, the following two equations are obtained:
\begin{eqnarray}
0 &=& {\bf H}_{R_2,2} {\bf v_2}  {\bf H}_{1,R_2},\label{5q}\\
0 &=& H_{R_3,2} v_3 H_{3,R_3} + {\bf H}_{R_2,2} {\bf v_2}  {\bf H}_{3,R_2}.\label{6q}
\end{eqnarray}
Similarly, the multipliers of $X_1[m]$ and $X_3[m]$ should be zero in destination $D_4$, and using \eqref{4qqt}, the following two equations are obtained:
\begin{eqnarray}
0 &=& {\bf H}_{R_2,4} {\bf v_2}  {\bf H}_{3,R_2},\label{7q}\\
0 &=& H_{R_1,4} v_1 H_{1,R_1} + {\bf H}_{R_2,4} {\bf v_2}  {\bf H}_{1,R_2},\label{8q}
\end{eqnarray}
In our proof, we assume that $v_1 = v_3 = 0$. So, we do not take advantage of relays $R_1$ and $R_3$. So, Eqns. \eqref{1q}-\eqref{8q} reduce to

\begin{eqnarray}
0 &=& {\bf H}_{R_2,1} {\bf v_2}  {\bf H}_{2,R_2},\label{1mq}\\
0 &=& {\bf H}_{R_2,1} {\bf v_2}  {\bf H}_{4,R_2},\label{2mq}\\
0 &=& {\bf H}_{R_2,3} {\bf v_2}  {\bf H}_{4,R_2},\label{3mq}\\
0 &=& {\bf H}_{R_2,3} {\bf v_2}  {\bf H}_{2,R_2},\label{4mq}\\
0 &=& {\bf H}_{R_2,2} {\bf v_2}  {\bf H}_{1,R_2},\label{5mq}\\
0 &=& {\bf H}_{R_2,2} {\bf v_2}  {\bf H}_{3,R_2},\label{6mq}\\
0 &=& {\bf H}_{R_2,4} {\bf v_2}  {\bf H}_{3,R_2},\label{7mq}\\
0 &=& {\bf H}_{R_2,4} {\bf v_2}  {\bf H}_{1,R_2}.\label{8mq}.
\end{eqnarray}
We need to find a non-zero solution for this set of equations to complete the proof. This is easy to see, because there are $9$ parameters (entries of ${\bf v_2}$) and $8$ Eqns. \eqref{1mq}-\eqref{8mq}. So, the solution to Eqns. \eqref{1mq}-\eqref{8mq} is at least a one-dimension space. This means there are infinitely many non-zero solutions to Eqns. \eqref{1mq}-\eqref{8mq}. 
\end{proof}

\begin{corollary}
In butterfly network given in Fig. \ref{fig:10}, if we remove relays $R_1$ and $R_3$, the degrees of freedom is still $4$.
\end{corollary}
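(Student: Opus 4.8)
The plan is to observe that the achievability proof of Theorem~\ref{thm_mul} already makes no use of relays $R_1$ and $R_3$: the beamformers $v_1$ and $v_3$ are explicitly set to zero, and the entire construction of the decodable scheme relies only on the existence of a nonzero $3\times 3$ matrix ${\bf v_2}$ satisfying the eight homogeneous bilinear equations \eqref{1mq}--\eqref{8mq}. Since none of those equations involves $v_1$, $v_3$, or any channel coefficient to/from $R_1$ or $R_3$, the very same ${\bf v_2}$ works verbatim when $R_1$ and $R_3$ are deleted from the network.

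Concretely, I would first recall the modified channel model: with $R_1$ and $R_3$ removed, the only relay is $R_2$ with $3$ antennas, and the received signals at the destinations become $Y_i[m] = {\bf H}_{R_2,i}{\bf v_2}{\bf Y}_{R_2}[m] + Z_i[m]$ for $i\in\{1,2,3,4\}$, with ${\bf Y}_{R_2}[m]=\sum_{i=1}^4 {\bf H}_{i,R_2}X_i[m]+{\bf Z}_{R_2}[m]$. This is exactly what \eqref{1qqt}--\eqref{4qqt} reduce to under $v_1=v_3=0$. Then I would invoke the dimension-counting argument from the proof of Theorem~\ref{thm_mul}: the system \eqref{1mq}--\eqref{8mq} consists of $8$ linear constraints on the $9$ entries of ${\bf v_2}$, so its solution space has dimension at least $1$, hence there is a nonzero ${\bf v_2}$. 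Pick such a ${\bf v_2}$; by the continuous-distribution and boundedness assumptions on the channel coefficients, the surviving scalar multipliers of the intended signals at each $D_i$ (e.g. the coefficient of $X_1[m]$ at $D_1$, which is $H_{1,R_1}$-free and equals a nonzero combination of ${\bf H}_{R_2,1}{\bf v_2}{\bf H}_{1,R_2}$-type terms) are almost surely nonzero, so each destination decodes its intended stream after subtracting its own known two-way contribution, giving $d_i=1$ for all $i$ and $\text{DoF}=4$. The matching upper bound is again the cutset bound, which only improves when relays are removed, so $\text{DoF}\le 4$ trivially.

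There is essentially no new obstacle here; the corollary is an immediate specialization. The only point deserving a sentence of care is confirming that the residual coefficients of the \emph{desired} signals do not accidentally vanish for the chosen ${\bf v_2}$ — one should note that this is a measure-zero event in the channel realizations, consistent with the ``almost every channel realization'' convention adopted in Section~\ref{sec2}, exactly as in the proof of Theorem~\ref{thm_mul}. I would keep the write-up to two or three sentences, phrased as ``this follows directly from the proof of Theorem~\ref{thm_mul}, in which relays $R_1$ and $R_3$ are already unused ($v_1=v_3=0$),'' and then restate the dimension count to justify the existence of a usable ${\bf v_2}$ in the reduced network.
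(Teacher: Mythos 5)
Your proposal is correct and follows essentially the same route as the paper: the paper's proof likewise observes that the achievability scheme of Theorem~\ref{thm_mul} already sets $v_1=v_3=0$ and hence never uses $R_1$ and $R_3$, while the cutset outer bound of $4$ continues to hold in the reduced network. Your added remarks on the dimension count for ${\bf v_2}$ and the almost-sure non-vanishing of the desired-signal coefficients are sound elaborations of details the paper leaves implicit.
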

\begin{proof}
The proof follows from the proof of Theorem \ref{thm_mul}. It is easy to see that the outer bound still holds. Also, in achievability scheme of Theorem \ref{thm_mul} relays $R_1$ and $R_3$ are not used.
\end{proof}

We note that relays $R_1$ and $R_3$ did not help to achieve optimal degrees of freedom for $1$ antenna case at $R_2$ (Theorem \ref{thm_1_LBddd}), and $3$ antennas case at $R_2$ (Theorem \ref{thm_mul}). Whether these relays help for two antennas case at $R_2$ is an open problem. 

\begin{remark}
We considered butterfly network with three antennas at $R_2$ as in Fig. \ref{fig:10}. However, with two antennas at $R_2$, above strategy do not directly work. More specifically, in the proof of Theorem \ref{thm_mul} with two antennas at $R_2$, there are still $8$ interference removal equations, but ${\bf v_2}$ will be $2\times 2$ and together with $v_1$ and $v_3$ it leads to only $6$ parameters. 
\end{remark}

\section{Conclusions}\label{sec5}
This paper studied the two-way butterfly network, a class of two-way four-unicast networks. We showed that bidirectional links do not increase the degrees of freedom for this network. Further, it was shown that enough caching at the relays or increasing the number of antennas in relays double the degrees of freedom for this network.

The considered class of two-way four-unicast networks, i.e., butterfly networks, demonstrates the challenges of finding degrees of freedom for a general single antenna network. Thus, finding the degrees of freedom for the general two-way four-unicast networks, with or without caching, remains an open problem. Finally, the DoF of butterfly network with multiple-antennas at each node is still open.

\bibliographystyle{IEEETran}
\bibliography{bib}

\end{document}